\documentclass[11pt,english,pra]{article}
\usepackage[T1]{fontenc}
\usepackage[latin9]{inputenc}
\usepackage[letterpaper]{geometry}
\geometry{verbose,bmargin=4cm,lmargin=2.2cm,rmargin=2.3cm}
\usepackage{amsthm}
\usepackage{amsmath}
\usepackage{amssymb}

\makeatletter
\theoremstyle{plain}


\makeatletter
\@ifundefined{textcolor}{}
{%
 \definecolor{BLACK}{gray}{0}
 \definecolor{WHITE}{gray}{1}
 \definecolor{RED}{rgb}{1,0,0}
 \definecolor{GREEN}{rgb}{0,1,0}
 \definecolor{BLUE}{rgb}{0,0,1}
 \definecolor{CYAN}{cmyk}{1,0,0,0}
 \definecolor{MAGENTA}{cmyk}{0,1,0,0}
 \definecolor{YELLOW}{cmyk}{0,0,1,0}
 }


\usepackage{amsthm}

\makeatletter
\@ifundefined{textcolor}{}
{%
 \definecolor{BLACK}{gray}{0}
 \definecolor{WHITE}{gray}{1}
 \definecolor{RED}{rgb}{1,0,0}
 \definecolor{GREEN}{rgb}{0,1,0}
 \definecolor{BLUE}{rgb}{0,0,1}
 \definecolor{CYAN}{cmyk}{1,0,0,0}
 \definecolor{MAGENTA}{cmyk}{0,1,0,0}
 \definecolor{YELLOW}{cmyk}{0,0,1,0}
 }
\theoremstyle{plain}


\makeatletter
\@ifundefined{textcolor}{}
{%
 \definecolor{BLACK}{gray}{0}
 \definecolor{WHITE}{gray}{1}
 \definecolor{RED}{rgb}{1,0,0}
 \definecolor{GREEN}{rgb}{0,1,0}
 \definecolor{BLUE}{rgb}{0,0,1}
 \definecolor{CYAN}{cmyk}{1,0,0,0}
 \definecolor{MAGENTA}{cmyk}{0,1,0,0}
 \definecolor{YELLOW}{cmyk}{0,0,1,0}
 }


\makeatletter
\@ifundefined{textcolor}{}
{%
 \definecolor{BLACK}{gray}{0}
 \definecolor{WHITE}{gray}{1}
 \definecolor{RED}{rgb}{1,0,0}
 \definecolor{GREEN}{rgb}{0,1,0}
 \definecolor{BLUE}{rgb}{0,0,1}
 \definecolor{CYAN}{cmyk}{1,0,0,0}
 \definecolor{MAGENTA}{cmyk}{0,1,0,0}
 \definecolor{YELLOW}{cmyk}{0,0,1,0}
 }


\makeatother

\newtheorem{thm}{Theorem}  
 \newtheorem{lem}{Lemma}
\newtheorem{prop}{Proposition}  

\makeatother

\makeatother

\makeatother

\makeatother

\makeatother

\makeatother

\usepackage{babel}

\begin{document}

\title{Entanglement, mixedness and perfect local discrimination of orthogonal
quantum states}

\author{Somshubhro Bandyopadhyay%
\thanks{Department of Physics and Center for Astroparticle Physics and Space
Science, Block EN, Sector V, Salt Lake, Kolkata 700091; email: som@bosemain.boseinst.ac.in%
}}
\maketitle
\begin{abstract}
It is shown that local distinguishability of orthogonal mixed states
can be completely characterized by local distinguishability of their
supports irrespective of entanglement and mixedness of the states.
This leads to two kinds of upper bounds on the number of perfectly
locally distinguishable orthogonal mixed states. The first one depends
only on pure-state entanglement within the supports of the states,
and therefore may be easy to compute in many instances. The second
bound is optimal in the sense that it optimizes the bounding quantities,
not necessarily function of entanglement alone, over all orthogonal
mixed state ensembles (satisfying certain conditions) admissible within
the supports of the density matrices. 
\end{abstract}

\section{Introduction}

A characteristic feature of quantum theory is that composite quantum
systems, whose parts do not interact, may possess nonlocal properties.
For example, entanglement \cite{Horodeckis-2009} and quantum information
both exhibit nonlocality. Entangled states are nonlocal because they
give rise to correlations that cannot be explained by local hidden
variable theories \cite{Bell-1964}, whereas, nonlocality of quantum
information is in the sense that a measurement on the whole system
sometimes reveals more information about the state than coordinated
local measurements on its parts \cite{Bennett-I-99,Bennett-II-99 +Divin-2003,Peres-Wootters-1991,Ghosh-2001,Horodecki-2003,Bandyo-2011}.
This nonlocal nature of quantum information is generally manifested
in the setting of local discrimination of quantum states \cite{Bandyo+Walgate-2009,Bandyo-2011,Bennett-I-99,Chefles2004,Ghosh-2001,Ghosh-2002,Hayashi-etal-2006,Horodecki-2003,Nathanson-2005,Peres-Wootters-1991,Virmanietal2001,Walgate-2000-2002,Watrous-2005,Duan2007-2009,Hayashi-2}.
One of the principal goals in quantum information theory is to understand
and quantify the relationship between entanglement and nonlocality
of quantum information. 

The difficulty in quantifying the role of entanglement in local state
discrimination is evident from some of the early results, which show
that the presence of entanglement is neither necessary nor sufficient
to ensure whether a given set of orthogonal states is locally indistinguishable.
That entanglement is not necessary is evident from the examples of
locally indistinguishable sets of orthogonal product states exhibiting
{}``non-locality without entanglement'' or forming an unextendible
product basis (UPB)\cite{Bennett-I-99,Bennett-II-99 +Divin-2003}.
On the other hand, any two orthogonal states can be perfectly distinguished
no matter how entangled they are \cite{Walgate-2000-2002}, showing
that entanglement is not sufficient for local indistinguishability.
Nevertheless, entanglement is often the key factor in a typical locally
indistinguishable set as in the case of a complete bipartite orthogonal
basis containing one or more entangled states \cite{Horodecki-2003,Ghosh-2001};
if such a set can be perfectly distinguished locally, then one can
create entanglement from product states using LOCC \cite{Horodecki-2003,Ghosh-2001},
a task known to be impossible.

Significant progress, which also motivated the present work, was reported
in \cite{Hayashi-etal-2006}, where it was shown that entanglement
does guarantee difficulty in local state discrimination. In particular,
it was shown that if the states (pure or mixed) $\sigma_{1},\sigma_{2},...,\sigma_{N}$
can be perfectly discriminated by LOCC, then the number of states
is bounded by\begin{equation}
N\leq D/\overline{d\left(\sigma_{i}\right)}\leq D/\overline{r(\sigma_{i})}\leq D/\overline{2^{E\left(\sigma_{i}\right)}}\leq D/\overline{2^{G(\sigma_{i})}},\label{eq:Hayashi}\end{equation}
where, $D$ is the dimension of the Hilbert space of the composite
quantum system; $d\left(\sigma\right)$ is a quantity (to be defined
later) resembling distance to the nearest separable state; $r\left(\sigma\right)=\mbox{rank}\left(\rho\right)\left(1+R_{g}\left(\rho\right)\right)$,
where $\rho$ is the normalized projector onto the support \cite{support}
of $\sigma$ and $R_{g}\left(\rho\right)$ is the robustness of entanglement
\cite{Harrow-3-Steiner-Vidal}; $E\left(\sigma_{i}\right)=E_{R}(\sigma_{i})+S\left(\sigma_{i}\right)$,
where $E_{R}\left(\sigma\right)$ is the relative entropy \cite{Vedral-Plenio-1998}
and $S\left(\sigma\right)$ is the von Neumann entropy; $G\left(\sigma\right)$
is the geometric measure \cite{Hayashi-etal-2006,shimoy-geometric-measure},
and $\overline{x_{i}}=\frac{1}{N}\sum_{i=1}^{N}x_{i}$ denotes the
average. If the inequality is violated for any of the bounding quantities,
then we can certainly conclude that the given set of states cannot
be perfectly locally distinguished. However, if the inequality is
satisfied then no such definite conclusion can be drawn. Applications
of inequality (\ref{eq:Hayashi}) for LOCC discrimination of interesting
multipartite ensembles having certain group symmetries can be found
in \cite{Hayashi-2}.

For pure states the bounding quantities (from right to left) correspond
to well-defined distance-like entanglement measures, namely, geometric
measure, relative entropy, and robustness of entanglement, thereby
allowing a clear interpretation: the number of pure states that can
be perfectly distinguished by LOCC is bounded by the total dimension
over average entanglement. This therefore clarifies the matter to
a great extent for pure states. For mixed states, however, no such
clear conclusion can be drawn, and the role of entanglement still
remains unclear. 

The purpose of the present work is to investigate how local distinguishability
of a given set of orthogonal mixed states depend on entanglement and
mixedness of the states. We first show that local distinguishability
of mixed states can be completely characterized by local distinguishability
of their supports%
\footnote{The support of a density matrix is the subspace spanned by its eigenvectors
corresponding to the nonzero eigenvalues.%
}. In particular, we establish a simple equivalence between local discrimination
of orthogonal states and subspaces in the sense that a given set of
density matrices can be perfectly distinguished by LOCC if and only
if their supports are also perfectly locally distinguishable, and
moreover, if the states can be perfectly distinguished, then the separable
measurement that distinguishes the states also distinguishes the supports
and vice versa. We use this fact to obtain the following results:
(a) state-specific properties such as inseparability and mixedness
of the density matrices (whose local distinguishability is under consideration)
do not have any special role in determining their local distinguishability,
(b) local distinguishability of mixed states may be completely determined
by maximal pure state entanglement within their supports, and (c)
the number of LOCC distinguishable orthogonal mixed states can be
bounded by the quantities that are optimized over all orthogonal mixed
state ensembles having identical supports. We now briefly discuss
results (a)-(c). 

For result (a), we show that the state-specific properties such as
inseparability and mixedness of the given mixed states do not have
any fundamental role in determining their local distinguishability.
To see this, suppose $\mathbb{S}=\left\{ \sigma_{1},\sigma_{2},...,\sigma_{N}\right\} $
is a set of orthogonal density matrices whose local distinguishability
is under question. Let $\left\{ s_{1},s_{2},...,s_{N}\right\} $ be
the set of orthogonal subspaces where, $s_{i}$ is the support of
$\sigma_{i}$. It is clear that infinitely many sets like $\mathbb{S}$
exist, where each set contains $N$ orthogonal density matrices with
the property that $s_{i}$ is the support of its $i^{\mbox{th}}$
element. Let $\mathbb{Q}$ be the collection of all such sets; that
is, $\mathbb{Q}=\left\{ \mathbb{S},\mathbb{S}^{\prime},...\right\} $.
We then show that either every set in $\mathbb{Q}$ is perfectly distinguishable
by LOCC or none of them are, regardless of how entangled or mixed
the states in a given set are. That is, if $\mathbb{S}$ is perfectly
distinguishable by LOCC, then so is any set, say $\mathbb{S}^{\prime}\in\mathbb{Q}$
and vice versa, even though average entanglement or mixedness could
be very different for the states in $\mathbb{S}$ and $\mathbb{S}^{\prime}$
(for instance, the density matrices in $\mathbb{S}$ may be highly
entangled, whereas the density matrices in $\mathbb{S}^{\prime}$
may be very weakly entangled). We call this property subspace degeneracy.
Thus the state-specific properties of the density matrices $\sigma_{i}$
do not have any special role as far as their local distinguishability
is concerned. 

For results (b) and (c) we use the above observations to present upper
bounds on the number of perfectly locally distinguishable orthogonal
mixed states. In particular, we obtain two kinds of upper bounds.
The first one shows that the number of orthogonal density matrices
that can be perfectly distinguished locally is bounded above by the
total dimension over the average of maximal pure state entanglement
in the supports of the density matrices. This bound is not necessarily
optimal but depends only on pure-state entanglement within the supports
of the states and therefore may be easy to compute in many instances.
This shows that local distinguishability of mixed states may be determined
by pure-state entanglement alone. The second bound is optimal in the
sense that it optimizes the bounding quantities over all orthogonal
ensembles (satisfying certain conditions) admissible within the supports
of the density matrices.

\section{Necessary conditions for perfect LOCC state discrimination}

Let $\mathcal{H}$ be the Hilbert space of a composite quantum system
and $D=\dim\mathcal{H}$. Throughout this paper we consider only finite-dimensional
systems. We note that any measurement realized by LOCC is separable
(the converse is not true \cite{Bennett-I-99}). A separable measurement
$\Pi=\left\{ \Pi_{1},\Pi_{2},\cdots,\Pi_{n}\right\} $ on $\mathcal{H}$
is a POVM satisfying $\sum_{i=1}^{n}\Pi_{i}=\mathcal{I_{H}}$, where
$\Pi_{i}$ is a separable, positive semi-definite operator for every
$i$. Therefore, if a set of quantum states is perfectly distinguishable
by LOCC, then there exists a separable measurement distinguishing
the states. For a necessary and sufficient condition for perfect discrimination
by separable measurements, see \cite{Duan2007-2009}.

We now state two necessary conditions for perfect LOCC state discrimination.
The first condition and its variants can be found in Refs. \cite{Hayashi-etal-2006,Watrous-2005,Chefles2004,Nathanson-2005,Duan2007-2009}
and the second condition is due to Ref. \cite{Hayashi-etal-2006}. 

\begin{prop} If the orthogonal quantum states $\sigma_{1},\sigma_{2},...,\sigma_{N}$
are perfectly distinguishable by LOCC then it is necessary that there
exists a separable POVM $\Pi=\left\{ \Pi_{1},\Pi_{2},\cdots,\Pi_{N}\right\} $
such that \begin{equation}
\mbox{Tr}\left(\Pi_{i}\sigma_{j}\right)=\delta_{ij}.\label{prop1-eq-1}\end{equation}
\end{prop}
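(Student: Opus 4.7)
The plan is to construct the required separable POVM by coarse-graining the measurement that the assumed LOCC protocol realizes. First, I would invoke the already-noted fact that any LOCC measurement is separable: the protocol produces a fine-grained POVM $\left\{ M_{k}\right\} $ indexed by the leaves of the classical-communication tree, and each $M_{k}$ is a tensor product of local positive operators, hence separable.

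Next, I would coarse-grain by final guess. Perfect discrimination means the protocol assigns to each leaf $k$ a unique label $f(k)\in\left\{ 1,\dots,N\right\} $, and that whenever the input is $\sigma_{j}$ the output equals $j$ with probability one. Setting
\[
\Pi_{i}\;=\;\sum_{k\,:\,f(k)=i}M_{k},
\]
completeness $\sum_{i}\Pi_{i}=\sum_{k}M_{k}=\mathcal{I_{H}}$ is immediate, and since the cone of separable positive operators is closed under sums and nonnegative scaling, each $\Pi_{i}$ is separable. Thus $\left\{ \Pi_{1},\dots,\Pi_{N}\right\} $ is a separable $N$-outcome POVM.

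Finally, the error-free condition gives (\ref{prop1-eq-1}) directly: the probability that the protocol outputs $i$ on input $\sigma_{j}$ is $\mbox{Tr}\left(\Pi_{i}\sigma_{j}\right)$, which by assumption equals $1$ when $i=j$ and $0$ otherwise, i.e.\ $\mbox{Tr}\left(\Pi_{i}\sigma_{j}\right)=\delta_{ij}$.

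There is no genuine obstacle here: the whole content is the closure of the separable cone under coarse-graining, together with the interpretation of \emph{perfect} discrimination as zero error probability on every input state. The one subtlety worth flagging, which will matter for the later support-based analysis, is that for mixed $\sigma_{j}$ the equality $\mbox{Tr}\left(\Pi_{i}\sigma_{j}\right)=0$ for $i\neq j$ is read as the operational requirement of zero misidentification; positivity of $\Pi_{i}$ and $\sigma_{j}$ then upgrades it to $\Pi_{i}\sigma_{j}=0$, i.e.\ the support of $\sigma_{j}$ lies in the kernel of every $\Pi_{i}$ with $i\neq j$.
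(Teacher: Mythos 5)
Your argument is correct and matches what the paper does: the paper states this proposition as a known fact (citing prior work), with exactly the same underlying reasoning you give, namely that every LOCC protocol realizes a separable fine-grained POVM whose coarse-graining by final guess is again separable and reproduces the zero-error statistics $\mbox{Tr}\left(\Pi_{i}\sigma_{j}\right)=\delta_{ij}$. Your added remark that positivity upgrades $\mbox{Tr}\left(\Pi_{i}\sigma_{j}\right)=0$ to $\Pi_{i}\sigma_{j}=0$ is also sound and is indeed the fact the later support-based arguments rely on.
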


\begin{prop}

A necessary condition for perfect LOCC discrimination of the states
$\sigma_{1},\sigma_{2},...,\sigma_{N}$ by a separable POVM $\Pi=\left\{ \Pi_{1},\Pi_{2},\cdots,\Pi_{N}\right\} $
is that the following inequality is satisfied:\begin{equation}
\sum_{i=1}^{N}d\left(\sigma_{i}\right)\leq D\label{prop-2-eq-1}\end{equation}
 where, $d\left(\sigma_{i}\right):=\min\frac{\mbox{Tr}\left(\Pi_{i}\right)}{\mbox{Tr}\left(\sigma_{i}\Pi_{i}\right)}$
such that $0\leq\frac{\Pi_{i}}{\mbox{Tr}\left(\sigma_{i}\Pi_{i}\right)}\leq\mathcal{I}$.\end{prop}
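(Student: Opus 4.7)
The plan is to combine Proposition 1 with the definition of $d(\sigma_i)$ and exploit the fact that the POVM elements themselves form admissible candidates for the minimization. First I would invoke Proposition 1: since the states $\sigma_1,\ldots,\sigma_N$ are perfectly locally distinguishable, there exists a separable POVM $\{\Pi_1,\ldots,\Pi_N\}$ with $\mathrm{Tr}(\Pi_i\sigma_j)=\delta_{ij}$. In particular, $\mathrm{Tr}(\sigma_i\Pi_i)=1$ for every $i$, which is the crucial normalization that allows the definition of $d(\sigma_i)$ to be evaluated along these specific $\Pi_i$'s.

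Next I would check the feasibility constraint. Because $\{\Pi_i\}$ is a POVM, each $\Pi_i$ is positive semi-definite and satisfies $\Pi_i\leq\sum_j\Pi_j=\mathcal{I}_{\mathcal{H}}$. Dividing by $\mathrm{Tr}(\sigma_i\Pi_i)=1$ leaves $0\leq \Pi_i/\mathrm{Tr}(\sigma_i\Pi_i)\leq\mathcal{I}$, so each POVM element is a legitimate candidate in the infimum defining $d(\sigma_i)$. Hence
\begin{equation}
d(\sigma_i)\;\leq\;\frac{\mathrm{Tr}(\Pi_i)}{\mathrm{Tr}(\sigma_i\Pi_i)}\;=\;\mathrm{Tr}(\Pi_i).
\end{equation}

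Finally I would sum over $i$ and use the POVM completeness relation: $\sum_{i=1}^{N}\mathrm{Tr}(\Pi_i)=\mathrm{Tr}\bigl(\sum_{i=1}^{N}\Pi_i\bigr)=\mathrm{Tr}(\mathcal{I}_{\mathcal{H}})=D$. Combining with the previous inequality yields $\sum_{i=1}^{N}d(\sigma_i)\leq D$, as claimed.

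There is no real obstacle here; the argument is essentially a one-line unpacking of the definition once Proposition 1 is in hand. The only point that deserves care is verifying the feasibility condition $\Pi_i/\mathrm{Tr}(\sigma_i\Pi_i)\leq\mathcal{I}$, which rests entirely on $\Pi_i$ being part of a POVM that resolves the identity, together with the perfect-distinguishability normalization $\mathrm{Tr}(\sigma_i\Pi_i)=1$. One might also remark that the separability of $\Pi$ is not actually used in this necessary condition; the bound would hold for any perfectly distinguishing POVM, though the separability plays its role upstream in Proposition 1 and downstream when $d(\sigma_i)$ is compared to entanglement-theoretic quantities.
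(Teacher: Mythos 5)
Your proof is correct and is essentially the standard argument: the paper itself states Proposition 2 without proof, attributing it to \cite{Hayashi-etal-2006}, and your reconstruction (use Proposition 1 to get $\mbox{Tr}(\sigma_i\Pi_i)=1$, note $0\leq\Pi_i\leq\mathcal{I}$ from the POVM completeness relation so that $\Pi_i$ is a feasible candidate, conclude $d(\sigma_i)\leq\mbox{Tr}(\Pi_i)$, and sum to get $\sum_i\mbox{Tr}(\Pi_i)=\mbox{Tr}(\mathcal{I}_{\mathcal{H}})=D$) is exactly that argument. One caveat on your closing remark: the claim that separability of $\Pi$ is ``not actually used'' holds only for the literal reading of the constraint set in the statement; as the paper's footnote and the proof of Theorem 3 make explicit, the minimization defining $d(\sigma_i)$ is taken over \emph{separable} operators (which is what makes $d$ comparable to robustness-type quantities), so the separability of the distinguishing POVM supplied by Proposition 1 is precisely what makes each $\Pi_i$ an admissible candidate in that minimization, and the step $d(\sigma_i)\leq\mbox{Tr}(\Pi_i)$ would not follow from the definition actually used downstream if the POVM were not separable.
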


Let us remark that the above necessary condition is particularly useful
for bounding the number of states that can be perfectly discriminated
by LOCC \cite{Hayashi-etal-2006}.

\section{Results}

\subsection{Local discrimination of orthogonal subspaces}

We first explain what we mean by LOCC discrimination of orthogonal
subspaces (for discrimination of non-orthogonal subspaces using global
measurements see \cite{Hillery-2006}). In local discrimination of
orthogonal subspaces, a pure quantum state shared between several
observers is guaranteed to belong to a subspace chosen from a known
collection of orthogonal subspaces. The goal is to determine by LOCC
to which subspace the state belongs without making any error. We assume
that within each subspace each state is equally likely, and so are
the subspaces. We will say that the subspaces $\left\{ \mathcal{S}_{1},\mathcal{S}_{2},...,S_{k}\right\} $
are perfectly locally distinguishable if we can perfectly distinguish
the set of density matrices $\left\{ \rho_{1},\rho_{2},...,\rho_{k}\right\} $
by LOCC, where $\rho_{i}$ is the normalized projector onto the subspace
$\mathcal{S}_{i}$. Clearly, the problem of local discrimination of
orthogonal subspaces is a special case of the general problem. We
begin with a simple but useful lemma.

\begin{lem} If the orthogonal subspaces $\left\{ \mathcal{S}_{1},\mathcal{S}_{2},...,\mathcal{S}_{k}\right\} $
are perfectly LOCC distinguishable, then so are the density matrices
$\left\{ \omega_{1},\omega_{2},...,\omega_{k}\right\} $ where, $\omega_{i}\in\mathcal{S}_{i}$.
\end{lem}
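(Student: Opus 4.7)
The plan is to invoke Proposition 1 to recast LOCC distinguishability of the subspaces as an algebraic property of a separable POVM, and then to show that the very same POVM (and hence the LOCC protocol implementing it) automatically distinguishes any states supported on those subspaces.

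Concretely, write $P_i$ for the projector onto $\mathcal{S}_i$ and $\rho_i = P_i/\dim\mathcal{S}_i$. By hypothesis the density matrices $\{\rho_i\}$ are perfectly LOCC distinguishable, so by Proposition 1 there exists a separable POVM $\Pi = \{\Pi_1,\ldots,\Pi_k\}$ with $\mathrm{Tr}(\Pi_i \rho_j) = \delta_{ij}$. The first key step is to upgrade the scalar condition $\mathrm{Tr}(\Pi_i P_j) = 0$ for $i \neq j$ to the operator condition $\Pi_i P_j = 0$. This follows from positivity: since $P_j$ is a projector, $\mathrm{Tr}(\Pi_i P_j) = \mathrm{Tr}(P_j \Pi_i P_j)$, and $P_j \Pi_i P_j \geq 0$ has zero trace, hence vanishes; writing $P_j \Pi_i P_j = (\sqrt{\Pi_i}\,P_j)^\dagger(\sqrt{\Pi_i}\,P_j) = 0$ gives $\sqrt{\Pi_i}\,P_j = 0$ and therefore $\Pi_i P_j = P_j \Pi_i = 0$.

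The second step is immediate. If $\omega_j$ is any density matrix supported on $\mathcal{S}_j$, then $\omega_j = P_j \omega_j P_j$, so for $i \neq j$,
\begin{equation}
\mathrm{Tr}(\Pi_i \omega_j) = \mathrm{Tr}(\Pi_i P_j \omega_j P_j) = 0,
\end{equation}
using $\Pi_i P_j = 0$. Combined with $\sum_i \Pi_i = \mathcal{I}_{\mathcal{H}}$ and $\mathrm{Tr}(\omega_j)=1$, this forces $\mathrm{Tr}(\Pi_j \omega_j) = 1$, so the same separable POVM $\Pi$ satisfies $\mathrm{Tr}(\Pi_i \omega_j)=\delta_{ij}$.

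To conclude that the $\omega_i$ are LOCC distinguishable (and not merely separable-measurement distinguishable), I would note that the LOCC protocol assumed to distinguish the subspaces implements $\Pi$ as its POVM; applying the very same protocol to an input $\omega_j$ produces outcome $i$ with probability $\mathrm{Tr}(\Pi_i \omega_j)$, which by the previous step equals $\delta_{ij}$. The only substantive step is the positivity argument that promotes the trace identity to $\Pi_i P_j = 0$; everything else is bookkeeping. I do not expect any real obstacle — the lemma is essentially the observation that a perfect LOCC identifier of $\rho_j$ cannot distinguish within $\mathcal{S}_j$, because $\rho_j$ is a uniform mixture of an orthonormal basis of $\mathcal{S}_j$ and a deterministic outcome on the mixture forces the same deterministic outcome on every state in the support.
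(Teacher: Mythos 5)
Your proposal is correct and follows essentially the same route as the paper: both take the separable POVM $\Pi$ that distinguishes the normalized projectors, observe that $\mathrm{Tr}(\Pi_i\rho_j)=\delta_{ij}$ forces each $\Pi_i$ ($i\neq j$) to annihilate $\mathcal{S}_j$, and conclude $\mathrm{Tr}(\Pi_i\omega_j)=\delta_{ij}$ via the normalization $\sum_i\mathrm{Tr}(\Pi_i\omega_j)=1$. The only difference is that you spell out the positivity argument promoting the trace condition to the operator identity $\Pi_i P_j=0$, a step the paper states as ``the POVM elements are orthogonal to the subspace'' without proof.
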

\begin{proof}
That the subspaces $\left\{ \mathcal{S}_{1},\mathcal{S}_{2},...,\mathcal{S}_{k}\right\} $
are perfectly LOCC distinguishable means that the set of orthogonal
density matrices $\left\{ \rho_{1},\rho_{2},...,\rho_{k}\right\} $,
the normalized projectors onto the subspaces, can be perfectly distinguished.
Thus there exists a locally implementable separable POVM $\Pi=\left\{ \Pi_{1},\Pi_{2},...,\Pi_{k}\right\} $
such that $\mbox{Tr}\left(\Pi_{i}\rho_{j}\right)=\delta_{ij}$. Denoting
$\rho_{j}=\frac{1}{\dim\mathcal{S}_{j}}\Lambda_{j}$, where, $\Lambda_{j}$
is the projection operator onto $\mathcal{S}_{j}$, we get,\begin{equation}
\frac{1}{\dim\mathcal{S}_{j}}\mbox{Tr}\left(\Pi_{i}\Lambda_{j}\right)=\delta_{ij};\label{lem-1-eq-1}\end{equation}
 thus for $i\neq j$ the POVM elements $\left\{ \Pi_{i}\right\} $
are all orthogonal to the subspace $\mathcal{S}_{j}$. Now for any
density matrix $\Delta$ and a POVM $\mathcal{M}=\left\{ \mathcal{M}_{i}:i=1,...,k\right\} $
the relation \begin{equation}
\sum_{i}\mbox{Tr}\left(\mathcal{M}_{i}\Delta\right)=1,\label{lem-1-eq-1.1}\end{equation}
 is valid. The summation indicates that the sum of the probabilities
must add up to 1 when the measurement $\mathcal{M}$ is performed
on the state $\Delta$, where, $\mbox{Tr}\left(\mathcal{M}_{i}\Delta\right)$
is the probability of obtaining outcome $i$. Suppose the POVM $\Pi$
is implemented on the given state chosen from $\left\{ \omega_{1},\omega_{2},...,\omega_{k}\right\} $.
We therefore have \begin{equation}
\sum_{i}\mbox{Tr}\left(\Pi_{i}\omega_{j}\right)=1,\label{lem1-eq-2}\end{equation}
 where $\mbox{Tr}\left(\Pi_{i}\omega_{j}\right)$ is the probability
of obtaining outcome $i$ when the input state is $\omega_{j}$. Because
$\omega_{j}\in\mathcal{S}_{j}$, $\omega_{j}$ must be orthogonal
to all POVM elements $\Pi_{i};i\neq j$. Therefore, \begin{equation}
\mbox{Tr}\left(\Pi_{i}\omega_{j}\right)=\delta_{ij}.\label{lem-1-eq-3}\end{equation}
 Thus the POVM $\Pi$ also perfectly distinguishes the states $\left\{ \omega_{1},\omega_{2},...,\omega_{k}\right\} $. 
\end{proof}

\subsection{Equivalence of LOCC discrimination of states and subspaces}

For a given set of orthogonal density matrices $\left\{ \rho_{i}:i=1,...,N\right\} $,
consider the set of subspaces $\left\{ \mathcal{S}_{1},...,\mathcal{S}_{N}\right\} $,
where $\mathcal{S}_{i}$ is the support of $\rho_{i}$. The orthogonality
of the density matrices implies that the supports are orthogonal.
We now show that the problems of local discrimination of orthogonal
states and subspaces are equivalent in the following sense. 

\begin{thm}The density matrices $\rho_{1},...,\rho_{N}$ are perfectly
distinguishable by LOCC if and only if their supports are. Moreover,
if the states are perfectly distinguishable, then the measurement
that distinguishes the states also distinguishes their supports and
vice versa. \end{thm}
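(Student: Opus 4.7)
The plan is to prove the two directions separately. The ``if'' direction will be an immediate consequence of Lemma 1, while the ``only if'' direction will require a short positivity-based calculation. Conveniently, in both arguments the \emph{same} separable POVM witnesses distinguishability on both sides, so the ``moreover'' clause falls out for free without any extra work.

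First, for the ``if'' direction, I would observe that each $\rho_{i}$ is a density matrix whose support is (by definition) $\mathcal{S}_{i}$, so the hypothesis $\omega_{i}\in\mathcal{S}_{i}$ of Lemma 1 is met with $\omega_{i}=\rho_{i}$. Hence any separable POVM that perfectly distinguishes the normalized projectors onto $\mathcal{S}_{1},\ldots,\mathcal{S}_{N}$ also perfectly distinguishes $\rho_{1},\ldots,\rho_{N}$, yielding one half of the claim together with one direction of the ``moreover'' statement.

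For the ``only if'' direction, I would start from a separable POVM $\Pi=\{\Pi_{1},\ldots,\Pi_{N}\}$ with $\mbox{Tr}(\Pi_{i}\rho_{j})=\delta_{ij}$ and show that the \emph{same} $\Pi$ perfectly distinguishes the normalized projectors $\Lambda_{j}/\dim\mathcal{S}_{j}$, where $\Lambda_{j}$ is the orthogonal projector onto $\mathcal{S}_{j}$. The crux is to promote the off-diagonal trace identity $\mbox{Tr}(\Pi_{i}\rho_{j})=0$ to the operator identity $\Pi_{i}\Lambda_{j}=0$. Expanding $\rho_{j}$ in its spectral decomposition writes $\mbox{Tr}(\Pi_{i}\rho_{j})$ as a sum of terms $p_{k}^{(j)}\langle\psi_{k}^{(j)}|\Pi_{i}|\psi_{k}^{(j)}\rangle$, each nonnegative since $\Pi_{i}\geq 0$ and $p_{k}^{(j)}>0$; vanishing of the sum therefore forces every $\langle\psi_{k}^{(j)}|\Pi_{i}|\psi_{k}^{(j)}\rangle=0$, and positivity of $\Pi_{i}$ then gives $\Pi_{i}|\psi_{k}^{(j)}\rangle=0$ for every eigenvector spanning $\mathcal{S}_{j}$, i.e.\ $\Pi_{i}\Lambda_{j}=0$. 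The completeness relation $\sum_{k}\Pi_{k}=\mathcal{I}$ then yields $\Pi_{j}\Lambda_{j}=\Lambda_{j}$, from which $\mbox{Tr}(\Pi_{i}\Lambda_{j}/\dim\mathcal{S}_{j})=\delta_{ij}$ is immediate, completing both the ``only if'' direction and the remaining half of the ``moreover'' clause.

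The only real obstacle is the positivity argument upgrading the single trace condition $\mbox{Tr}(\Pi_{i}\rho_{j})=0$ to the full operator annihilation $\Pi_{i}\Lambda_{j}=0$. This is what transports distinguishability information from specific mixed states to their entire supports, and it is precisely the mechanism behind the subspace-degeneracy phenomenon flagged in the Introduction; once this step is in hand the rest is bookkeeping with the POVM completeness relation.
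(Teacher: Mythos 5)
Your proposal is correct and follows essentially the same route as the paper: the ``if'' direction via Lemma 1 with $\omega_{i}=\rho_{i}$, and the ``only if'' direction by expanding $\rho_{j}$ in its spectral decomposition and using positivity of the POVM elements to force $\Pi_{i}$ ($i\neq j$) to annihilate the eigenvectors spanning the support, with the completeness relation supplying the diagonal condition $\mbox{Tr}\left(\Pi_{j}\varrho_{j}\right)=1$. The only difference is presentational: you make explicit the operator identity $\Pi_{i}\Lambda_{j}=0$, which the paper leaves implicit in the phrase that the $\Pi_{i}$ are ``orthogonal to'' the eigenstates.
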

\begin{proof}
Let $\Pi=\left\{ \Pi_{i}:i=1,...,N\right\} $ be the POVM that perfectly
distinguishes the set of density matrices $\left\{ \rho_{i}:i=1,...,N\right\} $
by LOCC. Therefore, $\mbox{Tr}\left(\Pi_{i}\rho_{j}\right)=\delta_{ij}$.
Let $s_{j}$ be the support of $\rho_{j}$ and $\varrho_{j}=\frac{1}{|\mathcal{P}_{j}|}\mathcal{P}_{j}$,
where $\mathcal{P}_{j}$ is the projector onto the subspace $s_{j}$
and $|\mathcal{P}_{j}|=\dim s_{j}$. To prove that the POVM $\Pi$
also perfectly distinguishes the subspaces $\mathcal{S}_{1},\mathcal{S}_{2},...,\mathcal{S}_{N}$,
we only need to show that $\mbox{Tr}\left(\Pi_{i}\varrho_{j}\right)=\delta_{ij}$.
That the POVM is locally implementable holds by the assumption that
it perfectly distinguishes $\left\{ \rho_{i}\right\} $. Consider
first the diagonal decomposition of $\rho_{j}:$\begin{equation}
\rho_{j}=\sum_{l=1}^{|\mathcal{P}_{j}|}p_{l}^{j}|\phi_{l}^{j}\rangle\langle\phi_{l}^{j}|.\label{thm-1-eq-1}\end{equation}
 From Eq.$\,$(\ref{lem-1-eq-1.1}) for every $l$ we have \begin{equation}
\sum_{i}\mbox{Tr}\left(\Pi_{i}|\phi_{l}^{j}\rangle\langle\phi_{l}^{j}|\right)=1.\label{thm-1-eq-2}\end{equation}
 Also $\mbox{Tr}\left(\Pi_{i}\rho_{j}\right)=\delta_{ij}$ implies
that all POVM elements $\Pi_{i}\;\left(i\neq j\right)$ are orthogonal
to the states $\left\{ |\phi_{l}^{j}\rangle\;:l=1,...,d_{j}\right\} $.
Using this fact, the above equation reduces to, \begin{equation}
\mbox{Tr}\left(\Pi_{i}|\phi_{l}^{j}\rangle\langle\phi_{l}^{j}|\right)=\delta_{ij}\quad:\forall l.\label{thm-1-eq-3}\end{equation}
 Noting that $\varrho_{j},$ the normalized projector onto the subspace
$s_{j}$ can be written as\begin{equation}
\varrho_{j}=\frac{1}{|\mathcal{P}_{j}|}\sum_{l=1}^{|\mathcal{P}_{j}|}|\phi_{l}^{j}\rangle\langle\phi_{l}^{j}|.\label{thm-1-eq-4}\end{equation}
 we immediately obtain $\mbox{Tr}\left(\Pi_{i}\varrho_{j}\right)=\delta_{ij}$
using Eqs.$\,$(\ref{thm-1-eq-3}) and (\ref{thm-1-eq-4}). Thus the
subspaces can be perfectly distinguished and the POVM $\Pi$ distinguishes
them. The rest of the proof, namely, the POVM that perfectly distinguishes
the orthogonal subspaces $\left\{ \mathcal{S}_{1},\mathcal{S}_{2},...,\mathcal{S}_{N}\right\} $
also distinguishes the orthogonal density matrices $\left\{ \rho_{1},...,\rho_{N}\right\} $
follows from Lemma 1.
\end{proof}
The condition in Theorem 1, though remarkably simple and intuitive,
is able to capture the essence of local state discrimination and,
in particular, the role of entanglement therein. In particular, Theorem
1 leads to what we call {}``subspace degeneracy,'' which is discussed
in the next section.

\subsection{Subspace degeneracy}

As noted in the Introduction, intuitively, subspace degeneracy means
that any given set $\mathbb{S}$ of orthogonal density matrices belongs
to a collection of infinitely many sets having identical distinguishability
properties no matter how different the average entanglement of the
individual sets are. For a given set $\mathbb{S}=\left\{ \sigma_{i}:i=1,...,N\right\} $
whose local distinguishability is under consideration, consider another
orthogonal set $\mathbb{S}^{\prime}=\left\{ \sigma_{i}^{\prime}:i=1,...,N\right\} $
with the property that for every $i$, $\sigma_{i}$ and $\sigma_{i}^{\prime}$
have identical support. Let $\mathbb{Q}=\left\{ \mathbb{S}^{\prime}\right\} $
be the collection of all such orthogonal sets $\mathbb{S}^{\prime}$.
Clearly, $\mathbb{S}$ is also a member of $\mathbb{Q}$. In other
words, given a set of orthogonal subspaces $S=\left\{ s_{i}:i=1,...,N\right\} $,
$\mathbb{Q}$ is simply the collection of only those sets $\mathbb{S}^{\prime}=\left\{ \sigma_{i}^{\prime}:i=1,...,N\right\} $
with the properties that the for every $i$, $\sigma_{i}^{\prime}\in s_{i}$
and $\mbox{rank}\left(\sigma_{i}^{\prime}\right)=\dim s_{i}$. By
simple application of Theorem 1 we obtain the next result. 

\begin{prop} All orthogonal sets in $\mathbb{Q}$ are either perfectly
LOCC distinguishable or none of them is, regardless of the average
entanglement of the individual sets. Furthermore, if the sets can
be perfectly distinguished by LOCC, then there is a separable measurement
$\Pi_{\mathbb{Q}}$ that distinguishes every set in $\mathbb{Q}$.
\end{prop}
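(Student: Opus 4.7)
The plan is to derive Proposition 3 directly from Theorem 1 by observing that every set in $\mathbb{Q}$ is, by construction, attached to the same fixed family of orthogonal subspaces $S=\{s_{1},\dots,s_{N}\}$. First I would verify that the associated supports really are the same for all members of $\mathbb{Q}$: the defining conditions $\sigma_{i}^{\prime}\in s_{i}$ and $\mbox{rank}(\sigma_{i}^{\prime})=\dim s_{i}$ force the support of $\sigma_{i}^{\prime}$ to be exactly $s_{i}$, since a positive operator that lives inside $s_{i}$ and whose rank matches $\dim s_{i}$ must have range equal to $s_{i}$. So the map $\mathbb{S}^{\prime}\mapsto (\mbox{supp}(\sigma_{1}^{\prime}),\dots,\mbox{supp}(\sigma_{N}^{\prime}))$ is constant on $\mathbb{Q}$ and equals $S$.

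Next I would apply Theorem 1 to each $\mathbb{S}^{\prime}\in\mathbb{Q}$ separately. Theorem 1 asserts that $\mathbb{S}^{\prime}$ is perfectly LOCC distinguishable if and only if its supports are. Since those supports are the same fixed family $S$ for every element of $\mathbb{Q}$, the biconditional yields a single common truth value: either the subspaces $s_{1},\dots,s_{N}$ are perfectly LOCC distinguishable, in which case every $\mathbb{S}^{\prime}\in\mathbb{Q}$ is perfectly LOCC distinguishable, or they are not, in which case no set in $\mathbb{Q}$ is. This establishes the dichotomy regardless of how the individual mixedness or entanglement profiles of the $\sigma_{i}^{\prime}$ vary across $\mathbb{Q}$.

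Finally, for the universal measurement $\Pi_{\mathbb{Q}}$, I would argue as follows in the distinguishable case. Since the subspaces $S$ are perfectly LOCC distinguishable, there exists by definition a separable POVM $\Pi_{\mathbb{Q}}=\{\Pi_{1},\dots,\Pi_{N}\}$ satisfying $\mbox{Tr}(\Pi_{i}\varrho_{j})=\delta_{ij}$, where $\varrho_{j}$ is the normalized projector onto $s_{j}$. Now pick any $\mathbb{S}^{\prime}=\{\sigma_{i}^{\prime}\}\in\mathbb{Q}$; each $\sigma_{i}^{\prime}$ lies entirely in $s_{i}$, so Lemma 1 (applied with $\omega_{i}:=\sigma_{i}^{\prime}$) immediately gives $\mbox{Tr}(\Pi_{i}\sigma_{j}^{\prime})=\delta_{ij}$. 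Hence the single measurement $\Pi_{\mathbb{Q}}$ perfectly distinguishes every member of $\mathbb{Q}$ simultaneously.

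I do not foresee a genuine obstacle here; the main content is really already packaged inside Theorem 1 and Lemma 1, and the proposition is essentially a reparametrization statement. The only minor subtlety worth spelling out carefully is the support-identification step in the first paragraph, because the rank equality is what prevents a set $\mathbb{S}^{\prime}\in\mathbb{Q}$ from smuggling in a state whose support is strictly smaller than the specified subspace; without that condition the dichotomy would still hold (by Lemma 1), but the converse direction of Theorem 1 would not automatically apply.
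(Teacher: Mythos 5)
Your proof is correct and is essentially the argument the paper intends when it says Proposition 3 follows ``by simple application of Theorem 1'': the rank condition fixes the supports to be exactly $s_{1},\dots,s_{N}$ for every member of $\mathbb{Q}$, Theorem 1 then gives the common dichotomy, and Lemma 1 (equivalently the converse direction of Theorem 1) shows the separable POVM that distinguishes the subspaces serves as the universal measurement $\Pi_{\mathbb{Q}}$.
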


Simply put, no matter how different the average entanglement of the
sets might be, as far as perfect local distinguishability is concerned
they are either equally hard or equally easy to distinguish. This
is in sharp contrast to pure states, where the result in \cite{Hayashi-etal-2006}
implies different upper bounds for pure ensembles of the same cardinality
but having different average entanglement. Thus, unlike pure states,
there cannot be any direct correlation between entanglement (under
any reasonable measure) of the states and their local distinguishability.
Furthermore, entanglement or mixedness of the states in $\mathbb{S}$
is not be crucial in determining whether $\mathbb{S}$ can be perfectly
distinguished or not by LOCC. We use this fact (Proposition 3) to
obtain two kinds of upper bounds on the number of perfectly LOCC distinguishable
orthogonal states.

\subsection{Upper bounds on the number of perfectly LOCC distinguishable orthogonal
states}

In this section we give two kinds of upper bounds on the number of
perfectly LOCC distinguishable density matrices. In the first one,
the bounding quantities depend only on the maximal pure-state entanglement
in the supports of the density matrices, whereas in the second we
use Proposition 3 to optimize the bounding quantities over all sets
in $\mathbb{Q}$. 

We first show how local distinguishability of a set of orthogonal
density matrices can be related to the local distinguishability of
a set of orthogonal pure states satisfying certain conditions. For
the orthogonal density matrices $\sigma_{1},...,\sigma_{N}$, let
$S=\left\{ |\psi_{1}\rangle,|\psi_{2}\rangle,...,|\psi_{N}\rangle\right\} $
be a collection of orthogonal pure states such that for every $i$,
$|\psi_{i}\rangle\in s_{i}$, where, $s_{i}$ is the support of $\sigma_{i}$. 

\begin{prop} If the states $|\psi_{1}\rangle,|\psi_{2}\rangle,...,|\psi_{N}\rangle$
are not perfectly distinguishable by LOCC then the density matrices
$\sigma_{1},...,\sigma_{N}$ cannot be perfectly distinguished by
LOCC. \end{prop}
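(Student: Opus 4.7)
The plan is to prove the contrapositive: assuming the density matrices $\sigma_1,\ldots,\sigma_N$ are perfectly LOCC distinguishable, I will show that any collection of orthogonal pure states with $|\psi_i\rangle \in s_i$ is perfectly LOCC distinguishable by the same separable POVM. This is the natural reformulation because both Theorem 1 and Lemma 1 are stated in the forward (``distinguishable implies distinguishable'') direction.

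The strategy rests entirely on previously established results. First, I invoke Theorem 1: if the $\sigma_i$'s admit a perfectly distinguishing separable POVM $\Pi=\{\Pi_1,\ldots,\Pi_N\}$, then the very same POVM perfectly distinguishes the supports $s_1,\ldots,s_N$. This is the crucial bridge, since it lifts the distinguishability statement from the specific density matrices $\sigma_i$ to their underlying subspaces, thereby freeing me to consider any density matrix supported in $s_i$.

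Next, I apply Lemma 1 with $\omega_i := |\psi_i\rangle\langle\psi_i|$. Since $|\psi_i\rangle\in s_i$, this rank-one density matrix is an admissible element of $s_i$, and Lemma 1 immediately yields $\mbox{Tr}(\Pi_i |\psi_j\rangle\langle\psi_j|)=\delta_{ij}$. Hence $|\psi_1\rangle,\ldots,|\psi_N\rangle$ are perfectly LOCC distinguishable by $\Pi$, contradicting the hypothesis of the proposition.

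There is essentially no obstacle here; the proposition is a one-line corollary of Theorem 1 composed with Lemma 1. The only subtlety worth flagging is verifying that $|\psi_i\rangle\langle\psi_i|$ genuinely qualifies as a density matrix whose support lies in $s_i$, but this is immediate from $|\psi_i\rangle\in s_i$. Accordingly, I would write the proof as a brief chained application of these two earlier results rather than redoing the POVM-orthogonality calculation from scratch.
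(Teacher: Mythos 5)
Your proposal is correct and matches the paper's own argument: the paper also chains Theorem 1 (states $\Rightarrow$ supports) with Lemma 1 applied to $\omega_i=|\psi_i\rangle\langle\psi_i|$, merely phrasing it as a contradiction rather than as the contrapositive. The two are logically the same proof.
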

\begin{proof}
The proof of the second statement is by contradiction. Suppose states
$|\psi_{1}\rangle,|\psi_{2}\rangle,...,|\psi_{N}\rangle$ cannot be
perfectly distinguished by LOCC but there is a LOCC protocol that
perfectly distinguishes the density matrices $\sigma_{1},...,\sigma_{N}$.
From Theorem 1 we know that if the density matrices $\sigma_{1},...,\sigma_{N}$
are perfectly LOCC distinguishable then one can also perfectly distinguish
the orthogonal subspaces $s_{1},s_{2},...,s_{N}$, where $s_{i}$
is the support of $\sigma_{i}$. This implies, by Lemma 1, that the
states $|\psi_{1}\rangle,|\psi_{2}\rangle,...,|\psi_{N}\rangle$ can
also be perfectly distinguished locally because for every $i$, $|\psi_{i}\rangle\in s_{i}$,
which contradicts our assumption. 
\end{proof}
It is important to note that if the states $|\psi_{1\rangle},|\psi_{2}\rangle,...,|\psi_{N}\rangle$
can be perfectly distinguished locally then it does \emph{not} mean
that the density matrices $\sigma_{1},...,\sigma_{N}$ can also be
reliably distinguished. For example, consider the following density
matrices in $2\otimes2$: $\sigma_{1}=\alpha|\Phi^{+}\rangle\langle\Phi^{+}|+\left(1-\alpha\right)|01\rangle\langle01|$
and $\sigma_{2}=\beta|\Phi^{-}\rangle\langle\Phi^{-}|+\left(1-\beta\right)|10\rangle\langle10|$
, where $\alpha\neq0$ and $\beta\neq0$. Clearly, $s_{1}=\mbox{span}\left\{ |\Phi^{+}\rangle,|01\rangle\right\} $
and $s_{2}=\mbox{span}\left\{ |\Phi^{-}\rangle,|10\rangle\right\} $.
While any two orthogonal vectors $|\psi_{1}\rangle,|\psi_{2}\rangle$,
where $|\psi_{i}\rangle\in s_{i}$ for $i=1,2$, are perfectly LOCC
distinguishable, the density matrices $\sigma_{1}$ and $\sigma_{2}$
are not. The reason is that neither of the subspaces can be spanned
only by product states thereby violating a necessary condition for
perfect local discrimination by separable measurements \cite{Duan2007-2009}.

To arrive at our upper bound we will use the previous proposition
and inequality (\ref{eq:Hayashi}). For a set of orthogonal pure states
$\left\{ |\phi_{1}\rangle,|\phi_{2}\rangle,...,|\phi_{N}\rangle\right\} $
inequality (\ref{eq:Hayashi}) becomes, \begin{equation}
N\leq\frac{D}{\overline{1+R(|\phi_{i}\rangle)}}\leq\frac{D}{\overline{2^{E_{R}(|\phi_{i}\rangle)}}}\leq\frac{D}{\overline{2^{E_{g}(|\phi_{i}\rangle)}}},\label{hayashi-pure}\end{equation}
where, the corresponding bounding quantities have been defined before. 

Now, if the states $|\psi_{1}\rangle,|\psi_{2}\rangle,...,|\psi_{N}\rangle$
as defined in Proposition 4 violate the above inequality then we can
certainly conclude that the density matrices $\sigma_{1},...,\sigma_{N}$
are not perfectly distinguishable by LOCC. Therefore, if the density
matrices $\sigma_{1},\sigma_{2},...,\sigma_{N}$ are perfectly LOCC
distinguishable, then the following inequality holds: \begin{equation}
N\leq\frac{D}{\overline{1+R(|\psi_{i}\rangle)}}\leq\frac{D}{\overline{2^{E_{R}(|\psi_{i}\rangle)}}}\leq\frac{D}{\overline{2^{E_{g}(|\psi_{i}\rangle)}}}.\label{hayashi-1}\end{equation}
Note that the inequality may still be satisfied even if the density
matrices are locally indistinguishable. The example given after Proposition
4 conforms to this fact. The crucial point is that, \emph{if the density
matrices are locally distinguishable then the inequality will not
be violated. }

Naturally we would like to maximize the bounding quantities over all
orthogonal pure state ensembles like $S$. Let $S_{\max}=\left\{ |\Psi_{1}\rangle,|\Psi_{2}\rangle,...,|\Psi_{N}\rangle\right\} $
be the set of orthogonal pure states with the properties that for
every $i$, (a) $|\Psi_{i}\rangle\in s_{i}$, and (b) $R\left(|\Psi_{i}\rangle\right)=\max_{\psi\in s_{i}}R\left(|\psi\rangle\right)$.
The first condition ensures that the states belong to the supports
of the density matrices, so that proposition 4 is applicable. The
second condition reflects the fact that for every $i$, $|\Psi_{i}\rangle$
is the state with maximum pure state entanglement in the support of
$\sigma_{i}$. Thus by replacing the pure state ensemble $S$ by $S_{\max}$
in (\ref{hayashi-1}) we have the following result. 

\begin{thm} If the density matrices $\sigma_{1},\sigma_{2},...,\sigma_{N}$
are perfectly LOCC distinguishable, then, \begin{equation}
N\leq\frac{D}{\overline{1+R(|\Psi_{i}\rangle)}}\leq\frac{D}{\overline{2^{E_{R}(|\Psi_{i}\rangle)}}}\leq\frac{D}{\overline{2^{E_{g}(|\Psi_{i}\rangle)}}}\label{hayashi-2}\end{equation}
where, for every $i$, $|\Psi_{i}\rangle\in s_{i}$, $R\left(\Psi_{i}\right)=\max_{\psi\in s_{i}}R\left(|\psi\rangle\right)$.
\end{thm}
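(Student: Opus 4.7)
The plan is to combine Proposition 4 with the pure-state inequality (\ref{hayashi-pure}) by making a specific optimal choice of pure state inside each support. First I would invoke Theorem 1 to convert the assumed LOCC distinguishability of $\sigma_{1},\ldots,\sigma_{N}$ into LOCC distinguishability of their supports $s_{1},\ldots,s_{N}$. Then Lemma 1 guarantees that \emph{any} choice of unit vectors $|\psi_{i}\rangle\in s_{i}$ yields an orthogonal tuple $|\psi_{1}\rangle,\ldots,|\psi_{N}\rangle$ that is itself perfectly LOCC distinguishable---orthogonality being automatic since the supports are pairwise orthogonal.

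Next I would apply the pure-state bound (\ref{hayashi-pure}) to this orthogonal tuple. Since the inequality must hold for \emph{every} admissible choice $(|\psi_{1}\rangle,\ldots,|\psi_{N}\rangle)$ with $|\psi_{i}\rangle\in s_{i}$, I would specialize to the distinguished tuple $S_{\max}=(|\Psi_{1}\rangle,\ldots,|\Psi_{N}\rangle)$ defined by $R(|\Psi_{i}\rangle)=\max_{|\psi\rangle\in s_{i}}R(|\psi\rangle)$. Substituting $S_{\max}$ into (\ref{hayashi-1}) produces the leftmost inequality of (\ref{hayashi-2}). The remaining two inequalities in the chain are pointwise consequences of $1+R(|\phi\rangle)\geq 2^{E_{R}(|\phi\rangle)}\geq 2^{E_{g}(|\phi\rangle)}$ valid for any pure state $|\phi\rangle$, averaged over $i$.

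No substantive obstacle stands in the way; the theorem is an immediate corollary of Proposition 4 together with the pure-state bound (\ref{hayashi-pure}). The only mild subtlety worth a sentence is the existence of the maximizers $|\Psi_{i}\rangle$, which is guaranteed because each $s_{i}$ is finite-dimensional and $R$ is continuous on the unit sphere. It is also worth noting that maximizing $R$ subspace-by-subspace (rather than $E_{R}$ or $E_{g}$) is legitimate because we only need inequality (\ref{hayashi-1}) to hold at the single tuple $S_{\max}$; the subsequent inequalities in (\ref{hayashi-2}) are then \emph{pointwise} comparisons between the measures evaluated on exactly those $|\Psi_{i}\rangle$, with no further optimization required.
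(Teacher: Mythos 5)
Your argument is correct and follows essentially the same route as the paper: the paper invokes Proposition 4 (itself proved exactly via Theorem 1 plus Lemma 1, the steps you reproduce inline) to conclude that any orthogonal tuple $|\psi_{i}\rangle\in s_{i}$ must satisfy the pure-state bound (\ref{hayashi-1}), and then substitutes the maximizing ensemble $S_{\max}$. Your added remarks on the existence of the maximizers and on the pointwise nature of the remaining inequalities are consistent with, and slightly more explicit than, the paper's presentation.
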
 

Let us note that an exact analytical formula for robustness $R$ is
known for pure bipartite states \cite{Harrow-3-Steiner-Vidal}. Therefore,
for any given set of bipartite orthogonal density matrices, the upper
bound can be explicitly calculated (one needs to optimize to get the
best possible bound). Inequality (\ref{hayashi-2}) shows that mixed
states also admit pure-state-like correlation between entanglement
and the number of locally distinguishable states. This allows us to
make a general statement on the connection between entanglement and
local distinguishability: \emph{The number of perfectly LOCC distinguishable
quantum states, pure or mixed, is bounded above by the total dimension
over the average of maximal pure state entanglement in the supports
of the states. }It is, however, important to note that the quantity,
second from left, in inequality (\ref{eq:Hayashi}) is always stronger
than the leftmost quantity in (\ref{hayashi-2}) \cite{Virmani-Pvt}.

Our second bound can be considered to be the optimized version of
the general mixed-state bound given by (\ref{eq:Hayashi}). From Proposition
3 we know that if the set $\mathbb{S}$ is perfectly LOCC distinguishable
then so is any set $\mathbb{S}^{\prime}\in\mathbb{Q}$, and the measurement
that distinguishes $\mathbb{S}$ also distinguishes any $\mathbb{S}^{\prime}$
and vice versa. Noting that the sets are of same cardinality, it simply
follows that an upper bound on the number of perfectly LOCC distinguishable
states for any $\mathbb{S}^{\prime}$ is also an upper bound for $\mathbb{S}$.
The optimal bound is thus obtained by maximizing the bounding quantities
over all $\mathbb{S}^{\prime}$. 

For the given set $\mathbb{S}=\left\{ \sigma_{i};i=1,...,N\right\} $,
let $Q_{i}$ be the set of all density matrices in $s_{i}$ having
rank equal to $\dim s_{i}$, where $s_{i}$ is the support of $\sigma_{i}$.
Define the following quantities: \begin{eqnarray*}
\mathcal{R}_{i} & = & \max_{\sigma^{\prime\prime}\in Q_{i}}\mathcal{R}\left(\sigma^{\prime\prime}\right),\\
\mathcal{E}_{i} & = & \max_{\sigma^{\prime\prime}\in Q_{i}}\left(E_{R}\left(\sigma^{\prime\prime}\right)+S(\sigma^{\prime\prime})\right),\\
\mathcal{G}_{i} & = & \max_{\sigma^{\prime\prime}\in Q_{i}}G\left(\sigma^{\prime\prime}\right),\end{eqnarray*}
where $\mathcal{R}\left(\sigma^{\prime\prime}\right):=\alpha^{-1}\left(1+R_{g}\left(\sigma^{\prime\prime}\right)\right)$,
with $\alpha$ being the maximum eigenvalue of $\sigma^{\prime\prime}$,
$R_{g}\left(\sigma^{\prime\prime}\right)$ is the \emph{global} robustness
of entanglement \cite{Harrow-3-Steiner-Vidal}, $E_{R}\left(\sigma^{\prime\prime}\right)$
is the relative entropy \cite{Vedral-Plenio-1998}, $S\left(\sigma^{\prime\prime}\right)$
is the von Neumann entropy, and $G\left(\sigma^{\prime\prime}\right)$
is the geometric measure \cite{Hayashi-etal-2006}.

\begin{thm} If the set of states $\mathbb{S}=\left\{ \sigma_{i};\, i=1,...,N\right\} $
is perfectly distinguishable by LOCC, then the number of states is
bounded by\begin{equation}
N\leq D/\overline{d\left(\sigma_{i}\right)}\leq D/\overline{\mathcal{R}_{i}}\leq D/\overline{2^{\mathcal{E}_{i}}}\leq D/\overline{2^{\mathcal{G}_{i}}},\label{Hayashi-revised}\end{equation}
 where, $\overline{x_{i}}=\frac{1}{N}\sum_{i=1}^{N}x_{i}$ denotes
the average. \end{thm}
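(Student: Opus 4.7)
The plan is to leverage the subspace degeneracy of Proposition 3 together with the Hayashi-type chain (\ref{eq:Hayashi}), optimizing the bounding quantities over $Q_i$ at each index $i$. The leftmost inequality $N\leq D/\overline{d(\sigma_i)}$ follows directly from Proposition 2 applied to $\mathbb{S}$; the real work lies in sharpening the remaining terms into their maximized versions $\mathcal{R}_i$, $\mathcal{E}_i$, $\mathcal{G}_i$.

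The first substantive step is to observe that $d(\sigma)$ is in fact a support-only quantity. Setting $\Pi'=\Pi_i/\mbox{Tr}(\sigma_i \Pi_i)$, the constraints in the definition of $d$ read: $\Pi'$ separable, $0\leq\Pi'\leq\mathcal{I}$, and $\mbox{Tr}(\sigma \Pi')=1$. The last equation, combined with $\Pi'\leq\mathcal{I}$, forces $(\mathcal{I}-\Pi')|\phi\rangle=0$ for every $|\phi\rangle$ in the support of $\sigma$, so $\Pi'$ acts as the identity on that support. Hence the feasible set in the optimization depends only on the subspace $s_i$, giving $d(\sigma_i)=d(\sigma_i'')$ for every $\sigma_i''\in Q_i$. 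This is the operational content of Proposition 3 as applied to $d$.

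I would then invoke the pointwise Hayashi bound $d(\sigma)\geq \mathcal{R}(\sigma)$ for an arbitrary $\sigma_i''\in Q_i$. Since $d(\sigma_i)=d(\sigma_i'')\geq \mathcal{R}(\sigma_i'')$ for every such $\sigma_i''$, taking the maximum over $Q_i$ yields $d(\sigma_i)\geq \mathcal{R}_i$. The further inequalities $\mathcal{R}_i\geq 2^{\mathcal{E}_i}\geq 2^{\mathcal{G}_i}$ follow from the pointwise Hayashi chain $\mathcal{R}(\sigma)\geq 2^{E_R(\sigma)+S(\sigma)}\geq 2^{G(\sigma)}$ combined with the elementary fact that $\max f\geq \max g$ whenever $f\geq g$ pointwise. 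Averaging in $i$ and taking reciprocals of $D$ then produces exactly the chain (\ref{Hayashi-revised}).

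The main obstacle is establishing the support-only property of $d$: without it there is no mechanism for replacing $\sigma_i$ by a $Q_i$-maximizer, and one is stuck with the state-specific bound (\ref{eq:Hayashi}). A secondary subtlety is that $\mathcal{R}_i$, $\mathcal{E}_i$, and $\mathcal{G}_i$ are independent maxima over $Q_i$ that need not be attained at a common $\sigma''\in Q_i$; but because each link in the underlying Hayashi chain is a pointwise inequality in $\sigma$, taking separate maxima on the two sides still preserves the ordering, which is all that the averaging argument requires.
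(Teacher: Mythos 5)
Your overall strategy coincides with the paper's: Proposition 2 supplies $N\leq D/\overline{d(\sigma_i)}$, and the rest is a pointwise (in $i$) chain $d(\sigma_i)\geq\mathcal{R}_i\geq 2^{\mathcal{E}_i}\geq 2^{\mathcal{G}_i}$ obtained by exploiting the fact that $d$ sees only the support, so that one may optimize over $Q_i$. Your argument that $d$ is support-only (rescale to $\Pi'=\Pi/\mbox{Tr}(\sigma\Pi)$ and note that $\mbox{Tr}(\sigma\Pi')=1$ together with $\Pi'\leq\mathcal{I}$ forces $\Pi'$ to act as the identity on $\mathrm{supp}\,\sigma$) is correct and in fact more self-contained than the paper's footnote, which appeals to the equivalence theorem.

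The gap lies in the two links you simply ``invoke.'' First, inequality (\ref{eq:Hayashi}) does not contain a pointwise bound $d(\sigma)\geq\mathcal{R}(\sigma)$: the robustness-type bound of Hayashi et al.\ is $d(\sigma)\geq r(\sigma)=\mbox{rank}(\sigma)\left(1+R_{g}(\rho)\right)$ with $\rho$ the \emph{normalized projector onto the support}, not $\alpha^{-1}\left(1+R_{g}(\sigma)\right)$. Proving $d(\sigma_i)\geq\mathcal{R}(\sigma'')$ for $\sigma''\in Q_i$ is precisely the technical core of the paper's proof (write $\sigma''\leq\alpha\mathcal{P}_i$, recast the separable state $\Pi_i/\mbox{Tr}(\Pi_i)$ as $\frac{\lambda_i}{\alpha}\left(\sigma''+\gamma\varrho''\right)$, and apply the definition of global robustness); your proposal assumes this step rather than supplying it. Second, and more seriously, the claimed pointwise inequality $\mathcal{R}(\sigma)\geq 2^{E_{R}(\sigma)+S(\sigma)}$ is false: for $\sigma=0.9\,|00\rangle\langle00|+0.1\,|01\rangle\langle01|$ one has $R_{g}(\sigma)=E_{R}(\sigma)=0$ and $\alpha=0.9$, so $\mathcal{R}(\sigma)=10/9\approx1.11$ while $2^{E_{R}(\sigma)+S(\sigma)}=2^{S(\sigma)}\approx1.38$ (indeed $2^{S(\sigma)}\geq 1/\alpha$ always, with equality only for flat spectra). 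Hence your ``max of a pointwise dominating function'' argument cannot yield $\mathcal{R}_i\geq 2^{\mathcal{E}_i}$. The paper closes this link by a detour through the support-only quantity $r(\sigma_i)$: since the normalized projector $\rho_i$ lies in $Q_i$, $\mathcal{R}_i\geq\mathcal{R}(\rho_i)=r(\sigma_i)$, and the Hayashi relative-entropy argument gives $r(\sigma_i)\geq 2^{E_{R}(\sigma'')+S(\sigma'')}\geq 2^{G(\sigma'')}$ for \emph{every} $\sigma''$ supported in $s_i$ (the left-hand side depends only on the support), which after maximizing yields $\mathcal{R}_i\geq 2^{\mathcal{E}_i}\geq 2^{\mathcal{G}_i}$. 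Without these two repairs the chain in (\ref{Hayashi-revised}) is not established.
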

\begin{proof}
It is sufficient to show that \begin{equation}
d\left(\sigma_{i}\right)\geq\mathcal{R}_{i}\geq\mathcal{E}_{i}\geq\mathcal{G}_{i}.\label{thm-3-eq-1}\end{equation}
 Inequality (\ref{Hayashi-revised}) is then obtained by combining
Proposition 2 and the above inequality and dividing by $N$. The proof
follows along lines very similar to that in \cite{Hayashi-etal-2006}. 

It was shown in \cite{Hayashi-etal-2006} that we can write $d\left(\sigma_{i}\right)=\min\left(\frac{1}{\lambda}\right)$
such that $\exists$ $\varrho^{\prime}$, satisfying\begin{equation}
\tilde{\Pi_{i}}=\lambda_{i}\mathcal{P}_{i}+\left(1-\lambda_{i}|\mathcal{P}_{i}|\right)\varrho^{\prime}\in\mathfrak{S}\label{thm-3-eq-4}\end{equation}
 along with the conditions\begin{eqnarray}
\mbox{Tr}\left(\mathcal{P}_{i}\varrho^{\prime}\right) & = & 0,\\
\langle\phi|\tilde{\Pi_{i}}|\phi\rangle & \geq & \lambda\;\forall|\phi\rangle,\end{eqnarray}
where $\tilde{\Pi_{i}}=\Pi_{i}/\mbox{Tr}\left(\Pi_{i}\right)$, $\mathcal{P}_{i}$
is the projector onto $s_{i}$ (the support of $\sigma_{i}$) , $|\mathcal{P}_{i}|=\dim s_{i}$,
and $\mathfrak{S}$ is the set of separable density matrices. Recall
that $Q_{i}$ is the set of all density matrices in $s_{i}$ having
rank equal to $|\mathcal{P}_{i}|$. Now observe that for any density
matrix $\sigma^{\prime\prime}\in Q_{i}$, we have $d\left(\sigma^{\prime\prime}\right)=d\left(\sigma_{i}\right)$
\cite{th-3-proof}. Now $\sigma^{\prime\prime}$ can be expressed
as \begin{equation}
\sigma^{\prime\prime}=\alpha\mathcal{P}_{i}-\beta\sigma^{\prime\prime\prime},\label{thm-3-eq-5}\end{equation}
 where $\alpha$ is the maximum eigenvalue of $\sigma^{\prime\prime}$,
$\beta=|\mathcal{P}_{i}|\alpha-1$, and $\sigma^{\prime\prime\prime}\in s_{i}$.
Equation (\ref{thm-3-eq-4}) can therefore be rewritten in the form,
\begin{equation}
\tilde{\Pi_{i}}=\frac{\lambda_{i}}{\alpha}\left(\sigma^{\prime\prime}+\gamma\varrho^{\prime\prime}\right)\in\mathfrak{S}\label{thm-3-eq-6}\end{equation}
 where, $\gamma=\left(\alpha-\lambda_{i}\right)/\lambda_{i}$. Noting
that the generalized (or global) robustness of entanglement \cite{Harrow-3-Steiner-Vidal}
of $R_{g}\left(\sigma\right)$ of any state $\sigma$ is defined by
$R_{g}\left(\sigma\right)=\min t$ such that there exists a state
$\varrho$ satisfying \begin{equation}
\frac{1}{1+t}\left(\sigma+t\varrho\right)\in\varsigma,\label{them-3--global-robust}\end{equation}
 where $\varsigma$ is a separable state, we immediately obtain $R_{g}\left(\sigma^{\prime\prime}\right)\leq\gamma$.
Thus for any $\sigma^{\prime\prime}\in Q_{i},$ we have \begin{equation}
d\left(\sigma^{\prime\prime}\right)\geq\alpha^{-1}\left[1+R_{g}\left(\sigma^{\prime\prime}\right)\right].\label{thm-3-eq-7}\end{equation}
 Thus, \begin{equation}
d\left(\sigma_{i}\right)\geq\mathcal{R}_{i}=\max_{\sigma^{\prime\prime}\in\mathcal{Q}_{i}}\mathcal{R}\left(\sigma^{\prime\prime}\right),\label{thm-3-eq-8}\end{equation}
 where $\mathcal{R}\left(\sigma^{\prime\prime}\right):=\alpha^{-1}\left[1+R_{g}\left(\sigma^{\prime\prime}\right)\right]$.
The rest of the proof is straightforward. It is easy to show that
for any density matrix $\sigma^{\prime\prime}\in Q_{i}$, the following
inequality holds: \begin{equation}
r\left(\sigma_{i}\right)\geq\mathcal{E}\left(\sigma^{\prime\prime}\right)\geq\mathcal{G}\left(\sigma^{\prime\prime}\right).\label{thm-3-eq-9}\end{equation}
 Thus we obtain \begin{equation}
\mathcal{R}_{i}\geq r\left(\sigma_{i}\right)\geq\mathcal{E}_{i}=\max_{\sigma^{\prime\prime}\in Q_{i}}\mathcal{E}\left(\sigma^{\prime\prime}\right)\geq\mathcal{G}_{i}=\max_{\sigma^{\prime\prime}\in Q_{i}}\mathcal{G}\left(\sigma^{\prime\prime}\right).\label{thm3-eq-10}\end{equation}
 Combining Eqs.$\,$ (\ref{thm3-eq-10}) and (\ref{thm-3-eq-8}),
we get Eq.$\,$(\ref{thm-3-eq-1}). This concludes the proof.
\end{proof}
A few remarks are in order. 

(i) By construction, for every bounding quantity, $\mathcal{R},$
$\mathcal{E}$, and $\mathcal{G}$, there always exists a set of orthogonal
quantum states $\mathbb{S}^{\prime}=\left\{ \sigma_{i}^{\prime};\, i=1,...,N\right\} \in\mathbb{Q}$
maximizing it, which is what the essence of the entire optimality
argument. For example, one can construct an orthogonal set $\mathbb{S}_{\mathcal{R}}^{\prime}\left(\sigma^{\prime}\right)=\left\{ \sigma_{i}^{\prime};\, i=1,...,N\right\} \in\mathbb{Q}$,
such that for every $i$, $\mathcal{R}_{i}=\mathcal{R}\left(\sigma_{i}^{\prime}\right)$,
and similarly for the quantities $\mathcal{E}$, and $\mathcal{G}$. 

(ii) A nice feature of the above inequality is that the hierarchical
form holds even when the bounding quantities are independently maximized
(this is clear from the proof), and different sets may maximize different
quantities. 

(iii) For any set $\mathbb{S}^{\prime\prime}=\left\{ \sigma_{i}^{\prime\prime};\, i=1,...,N\right\} \in\mathbb{Q}$,
the following inequality holds {[}inequality (\ref{Hayashi-revised})
is simply the optimized version of the following one{]}:

\begin{equation}
N\leq D/\overline{d\left(\sigma_{i}\right)}\leq D/\overline{\mathcal{R}\left(\sigma^{\prime\prime}\right)}\leq D/\overline{2^{\mathcal{E}\left(\sigma^{\prime\prime}\right)}}\leq D/\overline{2^{\mathcal{G}\left(\sigma^{\prime\prime}\right)}}.\label{Hayashi-3}\end{equation}

\section{Conclusions}

We have considered the problem of local distinguishability of orthogonal
mixed states. In particular, we have investigated how entanglement
and mixedness of the states influence their local distinguishability.
We have shown a general equivalence between local discrimination of
orthogonal states and subspaces which in turn implies that local distinguishability
of mixed states is completely determined by whether or not their supports
are also locally distinguishable. This led to the following results:
(a) state specific properties like inseparability, mixedness of the
density matrices do not have any special role in determining their
local distinguishability, (b) local distinguishability of mixed states
may be completely determined by maximal pure state entanglement within
their supports (c) an upper bound on the number of perfectly locally
distinguishable orthogonal mixed states is given where the bounding
quantities are optimized over all orthogonal mixed state ensembles
having identical supports. 

Although the results obtained in this paper and in \cite{Hayashi-etal-2006}
show that entanglement is a significant factor in local distinguishability,
many questions still remain open. For example, there are orthogonal
product states known to be locally indistinguishable \cite{Bennett-I-99,Bennett-II-99 +Divin-2003},
despite being completely unentangled. Whether there is a deeper reason
behind this phenomena or is it just a consequence of the fact that
not all separable measurements are locally implementable is not known
yet. Obviously entanglement of the states is a non-issue here, but
entanglement could still be important because to implement such separable
measurements by LOCC, one is expected to consume auxiliary entanglement.
Thus, it is necessary to quantify the entanglement cost of such separable
measurements. Another interesting class of states requiring further
investigation are those which despite being entangled and locally
indistinguishable, do not violate the inequalities presented in this
paper or in \cite{Hayashi-etal-2006}. All these examples show that
there is more to local distinguishability of quantum states than what
can be captured through entanglement only. \\

\textbf{Acknowledgments:} The author is grateful to Guruprasad Kar
(ISI, Kolkata) for many helpful discussions. Thanks to S. Virmani
and D. Markham for their comments on an earlier version of this manuscript.

\end{document}